\documentclass[conference]{IEEEtran}

\usepackage{epsfig}
\usepackage{cite}
\usepackage{latexsym,amsmath,amsthm}
\usepackage{graphicx}
\usepackage{epic}
\usepackage{times}
\usepackage{latexsym}
\usepackage{amssymb}
\usepackage{amsfonts}
\usepackage{psfrag}

\newtheorem{thm}{Theorem}[section]

\newtheorem{lem}[thm]{Lemma}
\newtheorem{definition}[thm]{Definition}

\newcommand{\ceil}[1]{\left\lceil #1 \right\rceil}
\newcommand{\dl}{{\tt l}}
\newcommand{\dr}{{\tt r}}
\newcommand{\brc}[1]{\left(#1\right)}
\newcommand{\ul}[1]{\underline{#1}}

\begin{document}

\title{\Large{Rate-Equivocation Optimal Spatially Coupled LDPC Codes for the BEC Wiretap Channel}}
\author{
\IEEEauthorblockN{
Vishwambhar Rathi\IEEEauthorrefmark{1}, 
R\"udiger Urbanke\IEEEauthorrefmark{2}, 
Mattias Andersson\IEEEauthorrefmark{1},  and 
Mikael Skoglund\IEEEauthorrefmark{1}}

\IEEEauthorblockA{\IEEEauthorrefmark{1}School of Electrical
Engineering and the ACCESS Linnaeus Center, \\Royal Institute of
Technology (KTH),\\ Stockholm, Sweden\\Email: \{vish, amattias,
skoglund@ee\}.kth.se}

\IEEEauthorblockA{\IEEEauthorrefmark{2}School of Computer and Communication Sciences\\ 
EPFL, Lausanne, Switzerland
\\Email: ruediger.urbanke@epfl.ch}
}
\maketitle

\begin{abstract}
We consider transmission over a wiretap channel where both the main
channel and the wiretapper's channel are Binary Erasure Channels
(BEC). We use convolutional LDPC ensembles based on the coset encoding
scheme. More precisely, we consider regular two edge type
convolutional LDPC ensembles. We show that such a construction
achieves the whole rate-equivocation region of the BEC wiretap
channel.

Convolutional LDPC ensemble were introduced by Felstr\"om and
Zigangirov and are known to have excellent thresholds.  Recently,
Kudekar, Richardson, and Urbanke proved that the phenomenon of
``Spatial Coupling'' converts MAP threshold into BP threshold for
transmission over the BEC.

The phenomenon of spatial coupling has been observed to hold for 
general binary memoryless symmetric channels.  Hence, we conjecture that our construction
is a universal rate-equivocation achieving construction when the
main channel and wiretapper's channel are binary memoryless symmetric
channels, and the wiretapper's channel is degraded with
respect to the main channel.
\end{abstract}

\section{Introduction}
The wiretap channel was introduced by Wyner in \cite{Wyn75}. The basic
diagram is depicted in Figure~\ref{fig:channel}. We consider the
setting when both channels are Binary Erasure Channels (BEC). We
denote a BEC with erasure probability $\epsilon$ by BEC($\epsilon$).
In a wiretap channel, Alice is communicating a message $W$ to Bob. The
message is uniformly chosen from the message set $\mathcal{W}_n$ and it
is sent through the main channel, which is a BEC($\epsilon_m$).  Alice
encodes $W$ as an $n$ bit vector $\ul{X}$ and transmits it. Bob
receives a partially erased version of $\ul{X}$, denote it by
$\ul{Y}$.  Eve is observing $\ul{X}$ via the wiretapper's channel,
which is a BEC($\epsilon_w$).  Let $\ul{Z}$ denote the observation of
Eve.  We denote this wiretap channel by BEC-WT($\epsilon_m,
\epsilon_w$). In order to fulfill the requirement of degradation of
the wiretapper's channel w.r.t. the main channel, we assume that
$\epsilon_w \geq \epsilon_m$.  We denote the capacity of the main
channel and wiretapper's channel by $C_m = 1-\epsilon_m$ and $C_w =
1-\epsilon_w$, respectively.  \psfrag{Alice}{Alice} \psfrag{Bob}{Bob}
\psfrag{Eve}{Eve} \psfrag{W}{$W$} \psfrag{X}{$\ul{X}$}
\psfrag{Y}{$\ul{Y}$} \psfrag{Z}{$\ul{Z}$}
\psfrag{BECm}{BEC$(\epsilon_m)$} \psfrag{BECw}{BEC$(\epsilon_w)$}
\begin{figure}[htbp]
  \centering
  \includegraphics[width=\columnwidth]{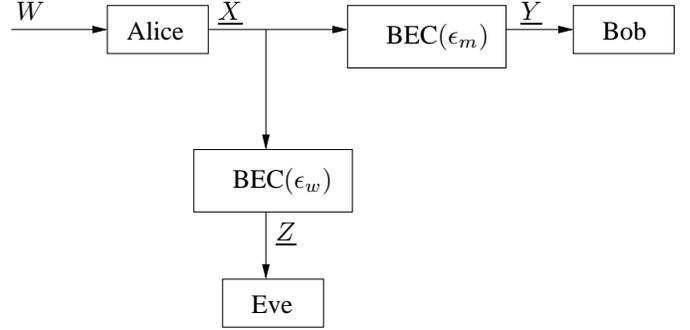}
  \caption{Wiretap channel.}
  \label{fig:channel}
\end{figure}
The encoding of the message $W$ by Alice should be such that Bob is
able to decode $W$ reliably and that $\ul{Z}$ provides as little
information to Eve as possible about $W$.

Assume that transmission takes place using the code $G_n$ and let
$\hat{W}$ be the message decoded by Bob.  We define the performance
metric for reliability to be the average error probability
$P_e\brc{G_n}$,
\begin{equation}
	P_e\brc{G_n} = \frac{1}{|\mathcal{W}_n|} \sum_{w \in \mathcal{W}_n} P\brc{\hat{W} \neq w \mid W = w}. 
\end{equation} 
We use the normalized equivocation $R_e$ as the performance metric for
secrecy,
\begin{equation}
  R_e\brc{G_n} = \frac{1}{n} H\brc{W \mid \ul{Z}}. 
\end{equation}
The rate $R$ of the coding scheme for the intended receiver Bob is
given by
\begin{equation}
  R(G_n) = \frac{\log_2\brc{|\mathcal{W}_n|}}{n}. 
\end{equation}
We say that a rate-equivocation pair $(R, R_e)$ is achievable using a
sequence of codes $G_n$ if
\begin{equation}\label{eq:achcrit}
 \lim_{n \to \infty} R(G_n) = R, \ \lim_{n \to \infty} P_e\brc{G_n} = 0, \  R_e \leq \liminf_{n \to \infty} R_e(G_n). 
\end{equation}
The achievable rate-equivocation pair $(R, R_e)$ for the
BEC-WT($\epsilon_m, \epsilon_w$) is given by \cite{CsK78},
\begin{equation}
R_e \leq R \leq C_m, \quad 0 \leq R_e \leq C_m-C_w. 
\end{equation}
\begin{figure}[htp]
  \centering \setlength{\unitlength}{1bp}%
  \begin{picture}(113,90)
    \put(0,0){\includegraphics[scale=0.9]{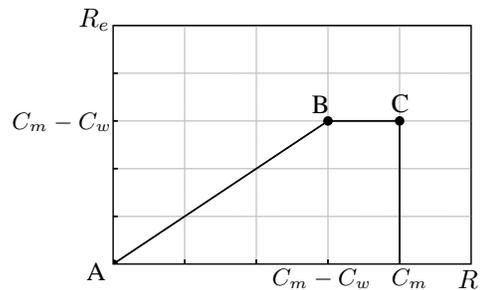}} {
      \put(-13, 88){\makebox(0,0)[lb]{$R_e$}}
      \put(-10, -6){\makebox(0,0)[lb]{A}}
      \put(130,-9){\makebox(0,0)[lb]{$R$}}
      \put(105,-9){\makebox(0,0)[lb]{\small $C_m$}}
      \put(60,-9){\makebox(0,0)[lb]{\small $C_m-C_w$}}
      \put(-38, 50){\makebox(0,0)[lb]{\small $C_m-C_w$}}
      \put(75, 57){\makebox(0,0)[lb]{B}}
      \put(105, 57){\makebox(0,0)[lb]{C}}
}
\end{picture}
\caption{\label{fig:rate_equ} Achievable rate equivocation region for
  BEC-WT($\epsilon_m, \epsilon_w$).}
\end{figure}

Note that we consider weak notion of secrecy as opposed to the strong notion \cite{MaW00, LPS09}. 

From Figure~\ref{fig:rate_equ}, we see that the boundary of the
achievable rate-equivocation region is composed of two branches,
namely AB and BC. The branch AB corresponds to achieving \emph{perfect
  secrecy}, i.e., $R_e=R \leq C_m-C_w$. The point B corresponds to the
\emph{secrecy capacity}, the highest rate at which perfect secrecy is
possible. The branch BC corresponds to achieving information rates
higher than secrecy capacity.  However, in this case some information
``leaks'' to Eve (the equivocation in this case is strictly smaller
than the rate).

Recently, it has been shown that, using Arikan's polar codes
\cite{Ari09}, it is possible to achieve the whole rate-equivocation
region \cite{ARTKS10,HoS10,MaV10,OzG10}. In
this paper, we show that convolutional LDPC codes achieve the whole
rate-equivocation region for the BEC wiretap channel. Why might this
be of interest?  Compared to polar codes, convolutional LDPC ensembles
have two potential advantages. First, these codes are not only
asymptotically very good but they are know to be competitive with the
best known codes already for modest lengths. Second, convolutional
LDPC ensembles have the potential of being {\em universal}, i.e., one
and the same code is optimal for a large class of channels. Before
discussing this point in more detail, let us first quickly review the
literature on convolutional LDPC codes.

Convolutional LDPC codes were introduced by Felstr\"om and Zigangirov
and were shown to have excellent thresholds \cite{FeZ99}. There has
been a significant amount of work done on convolutional-like LDPC
ensembles \cite{EnZ99,LTZ01,TSSFC04,SLCZ04,LSZC05,LFZC09}, and
see in particular the literature review in \cite{KRU10}. The
explanation for the excellent performance of convolutional-like or
``spatially coupled'' codes over the BEC was given by Kudekar,
Richardson, and Urbanke in \cite{KRU10}.  (In the following, we also
use the term spatially coupled codes when we refer to convolutional
like codes.) More precisely, it was shown in \cite{KRU10} that the
phenomenon of spatial coupling has the effect of converting MAP
threshold of underlying ensemble to BP threshold for BEC and regular
LDPC codes. This phenomenon has been observed to hold in general over
Binary Memoryless Symmetric (BMS) channels, see \cite{KMRU10,LMFC10}.

Thus, when point-to-point transmission is considered over BMS
channels, regular convolutional-like LDPC ensembles are conjectured to
be {\it universally} capacity achieving. This is because the MAP
threshold of regular LDPC ensembles converges to the Shannon threshold
for BMS channels as their left and right degrees are increased by
keeping the rate fixed. To date there is only empirical evidence for
this conjecture. But should in the future a proof be found that
spatially coupled codes are indeed universal for point-to-point
channels, then this would immediately imply that our construction for
the wiretap channel is also universal.

Let us summarize. Our two main motivations for considering code constructions
for the wire-tap channel based on spatially coupled codes is that
these codes perform very well already for modest code lengths and that they
have the potential to be universal.

In \cite{TDCMM07} and \cite{Poor2} coset encoding scheme based sparse
graph codes were given. It was shown in \cite{RATKS10} that
a two edge type LDPC code is a natural candidate for the coset encoding scheme 
 and optimized degree distributions were presented. In the
next section we describe our code design method using spatially coupled
codes. 

%

\section{Code Construction}
We first describe the coset encoding scheme. 
Let $H$ be an $(1-r) n \times n$ LDPC matrix and let $H_1$
and $H_2$ be the submatrices of $H$ such that
\begin{equation}\label{eq:twoedgemat}
  H = \begin{bmatrix}
    H_1 \\ H_2
  \end{bmatrix},
\end{equation}
where $H_1$ is an $(1-r_1) n\times n$ and $H_2$ is an $R n \times n$
matrix.  Let $G^{(1)}_n$ be the code with parity-check matrix $H_1$,
and let $G_n^{(1,2)}$ be the code whose parity-check matrix is
$H$. Assume that Alice wants to transmit an $n R$-bit message
$\ul{S}$. To do this she transmits $\ul{X}$, which is a randomly
chosen solution of
\[
\begin{bmatrix}
  H_1 \\ H_2
\end{bmatrix} \ul{X} = [0 \cdots 0 \ul{S}]^T.
\]
As shown in \cite{TDCMM07}, if $H$ is capacity achieving over the wiretapper's channel 
then $\ul{S}$ is perfectly secure from Eve.  Also, if the threshold of the code $G^{(1)}_n$ 
is higher than the main channel erasure probability $\epsilon_m$ then Bob can recover $\ul{S}$ reliably. We call this wiretap code $G_n$.

The code described by the LDPC matrix $H$ given in (\ref{eq:twoedgemat}) is a two edge type LDPC code. 
The two types of edges are the edges connected to check nodes in $H_1$ 
and those connected to check nodes in $H_2$.  An example of a two edge 
type LDPC code is shown in Figure~\ref{fig:ldpc}.
\psfrag{Type 1 checks}{Type 1 checks}
\psfrag{Type 2 checks}{Type 2 checks}
\psfrag{x1l}{$x_1^{(l)}$}
\psfrag{x2l}{$x_2^{(l)}$}
\psfrag{y1l}{$y_1^{(l)}$}
\psfrag{y2l}{$y_2^{(l)}$}
\begin{figure}[htbp]
  \centering
  \includegraphics[width=\columnwidth]{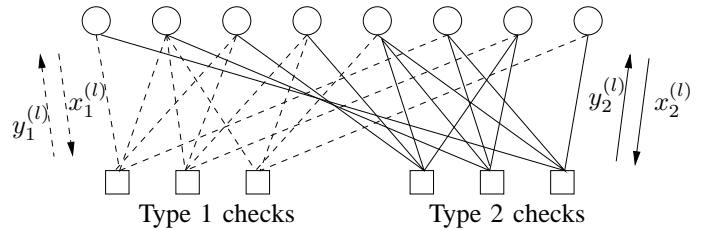}
  \caption{Two edge type LDPC code.}
  \label{fig:ldpc}
\end{figure}

For our purpose it is sufficient to focus on regular two edge type LDPC ensembles.
\begin{definition}[$\{\dl_1, \dl_2, \dr_1, \dr_2\}$ Two Edge Type LDPC Ensemble]
  A $\{\dl_1, \dl_2, \dr_1, \dr_2\}$ two edge type LDPC ensemble of
  blocklength $n$ contains all the bipartite graphs (allowing multiple
  edges between a variable node and a check node) where all the $n$
  variable nodes are connected to $\dl_i$ check nodes of type $i$ and
  all the type $i$ check nodes have degree $\dr_i$, $i \in \{1, 2\}$.
\end{definition}
A protograph of a regular two edge type LDPC code is shown in
Figure~\ref{fig:36protograph}.
\begin{figure}[htp] \begin{centering}
\setlength{\unitlength}{1.0bp}%
\begin{picture}(210,86)(0,0)
\put(0,0)
{
\put(0,0){\rotatebox{0}{\includegraphics[scale=1.0]{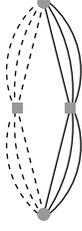}}}
}
\end{picture}
    \caption{A protograph of a two edge type
      LDPC ensemble with $\dl_1=\dl_2=3$ and $\dr_1=\dr_2=6$.
    } \label{fig:36protograph} \end{centering}
\end{figure}

Based on the definition of an $\{\dl, \dr, L, w\}$ ensemble from
\cite{KRU10}, we define the regular spatially coupled two edge
type LDPC ensemble. Before giving this definition, we define 
$\mathcal{T}(\dl)$ to be the set of $w$-tuple of non-negative integers 
which sum to $\dl$. More precisely, 
$\mathcal{T}(\dl)=\{(t_0,\cdots,t_{w-1}):\sum_{j=0}^{w-1} t_j=\dl\}$.\\
{\it Remark:} Note that the $w$-tuple $(t_0,\cdots,t_{w-1})$ is called a 
{\it type} in \cite{KRU10}. We avoid this terminology as we refer to different 
edges in two edge type LDPC ensemble by their type.

\begin{definition}[$\{\dl_1, \dl_2, \dr_1, \dr_2, L, w\}$ Spatially
Coupled Two Edge Type LDPC Ensemble]
Assume that there are $M$ variable nodes at positions $[-L, L]$, $L \in
\mathbb{N}$.  The blocklength of a code in the ensemble is
$n = M(2L+1)$. Every variable node has degree $\dl_1$ with respect to
type $1$ edges and $\dl_2$ with respect to type $2$ edges.  At each
position there are $M$ variable nodes, $\frac{\dl_1}{\dr_1} M$ check
nodes of type $1$ which has degree $\dr_1$, and $\frac{\dl_2}{\dr_2} M$ check nodes of type
$2$ which has degree $\dr_2$. 
  
Assume that for each variable node we order its edges in an arbitrary but fixed 
order. A {\it constellation} $c$ of type $j$ is an $\dl_j$-tuple, 
$c=(c_1,\cdots,c_{\dl_j})$ with elements in $\{0,1,\cdots,w-1\}$, $j \in \{1, 2\}$. 
Its operational significance is that if a variable node at position $i$ has 
type $j$ constellation as $c_j$ then its $k$-th edge of type $j$ is connected to 
a check node at position $i+c_k$, $j \in \{1,2\}$. We denote the set of all the 
type $j$ constellations by $\mathcal{C}_j$. Let $\tau(c)$ be the 
$w$-tuple which counts the occurence of $0, 1, \cdots, w-1$ in $c$. Clearly, if 
$c$ is a type $j$ constellation then $\tau(c) \in \mathcal{T}(\dl_j)$. We impose uniform 
distribution over both the type of constellations. This imposes the following distribution 
over $t \in \mathcal{T}(\dl_j)$ 
\[
	p^{(j)}(t) = \frac{|\{c \in \mathcal{C}_j: \tau(c)=t\}|}{w^{\dl_j}}, \quad j \in \{1, 2\}.
\]
Now we pick $M$ so that $M p^{(1)}(t_1) p^{(2)}(t_2)$ is a natural number for 
$\forall t_1 \in \mathcal{T}(\dl_1), \forall t_2 \in \mathcal{T}(\dl_2)$. For each position 
$i$ pick $M p^{(1)}(t_1) p^{(2)}(t_2)$ which have their type $j$ edges assigned 
according to $t_j$, $j \in \{1, 2\}$. We use a random permutation for each variable and 
type $j$ edge over $\dl_j$ letters to map $t_j$ to a constellation, $j \in \{1, 2\}$. 
Ignoring boundry effects, for each check position $i$, the number of type $j$ edges 
that come from variables at position $i-k$, $k \in \{0,\cdots,w-1\}$, is 
$M \frac{\dl_j}{w}$, $j \in \{1, 2\}$. This implies, it is  exactly a fraction 
$\frac{1}{w}$ of the total number $M \dl_j$ of sockets at position $i$. At the 
check nodes, we distribute this edges by randomly choosing a permutation over 
$M \dl_j$ letters, to the $M \frac{\dl_j}{\dr_j}$ check nodes of type $j$, 
$j \in \{1, 2\}$. 

\end{definition} 
{\it Remark:}
Each of the $\dl_1$ (resp. $\dl_2$) type $1$ (resp. $2$) 
connections of a variable node at position $i$ is uniformly and
independently chosen from the range $[i,\dots,i+w-1]$, where $w$ is
a ``smoothing'' parameter. Similarly, as was remarked in \cite{KRU10}, 
for each check node each edge is roughly independently chosen to be 
connected to one of its nearest $w$ ``left'' neighbors. More precisely, 
the corresponding probability deviates at most by a term of order $1/M$ 
from the uniform distribution. 

To summarize, a $\{\dl_1, \dl_2, \dr_1, \dr_2, L, w\}$ spatially
coupled two edge type LDPC ensemble is obtained by replacing the
standard regular LDPC ensemble in the $(\dl, \dr, L, w)$ ensemble
(defined in \cite{KRU10}) by a $\{\dl_1, \dl_2, \dr_1, \dr_2\}$ two
edge type LDPC ensemble. The spatial coupling is done such that only the edges of
the same type are coupled together.
\begin{figure}[htp]
\begin{centering}
\setlength{\unitlength}{1bp}%
\setlength{\unitlength}{1bp}%
\begin{picture}(210,100)(-5,-10)
\put(0,0)
{
\put(-5,0){\includegraphics[scale=1.0]{}}
\put(55,-10){\makebox(0,0)[b]{$\text{-}1$}}
\put(95,-10){\makebox(0,0)[b]{$0$}}
\put(135,-10){\makebox(0,0)[b]{$1$}}
}
\end{picture}

\caption{A coupled chain of protographs of a  two edge type LDPC code with $L=1$ 
for $\dl_1=\dl_2=3$ and $\dr_1=\dr_2=6$. \label{fig:chain}}
\end{centering}
\end{figure}
An example of a protograph of a two edge type LDPC code is shown in 
Figure~\ref{fig:36protograph} and its spatially coupled version is shown 
in Figure~\ref{fig:chain}.

In the next lemma we show that if the degrees of the two types of
check nodes are the same, i.e. if $\dr_1=\dr_2=\dr$, then the $\{\dl_1,
\dl_2, \dr, \dr, L, w\}$ spatially coupled two edge type LDPC ensemble
has the same asymptotic performance as that of the spatially coupled
ensemble $(\dl_1+\dl_2, \dr, L, w)$.
\begin{lem}\label{lem:ensembleeq}
The $\{\dl_1, \dl_2, \dr, \dr, L, w\}$ spatially coupled two edge type
LDPC ensemble has the same BP threshold as the spatially coupled ensemble
$(\dl_1+\dl_2, \dr, L, w)$.
\end{lem}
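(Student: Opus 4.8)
The key observation is that when $\dr_1=\dr_2=\dr$, the two types of check nodes are statistically indistinguishable from the point of view of belief propagation over the BEC. In the standard $(\dl_1+\dl_2, \dr, L, w)$ ensemble, each variable node has $\dl_1+\dl_2$ edges, each independently and (up to $O(1/M)$ corrections) uniformly connected to a check node in one of its $w$ nearest right neighbors, and every check node has degree $\dr$. In the two edge type ensemble, the $\dl_1$ type-$1$ edges and the $\dl_2$ type-$2$ edges of each variable node are coupled in exactly the same way — each edge lands in position $i+c_k$ with $c_k$ uniform on $\{0,\dots,w-1\}$ — and all check nodes (whether type $1$ or type $2$) have the same degree $\dr$. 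So I would first argue that the only structural difference between the two ensembles is a \emph{labeling} of edges and checks into two classes, and that this labeling is invisible to the BEC-BP message passing, which depends only on the local degrees and the coupling (position) structure.

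\textbf{Next I would write down the density-evolution fixed-point equations for both ensembles and verify they coincide.} For the scalar BEC density evolution one tracks, for each position $i \in [-L,L]$, the erasure probability $x_i$ of a variable-to-check message. The variable-node update multiplies together the contributions coming through all $\dl_1+\dl_2$ edges, while the check-node update (for a check of degree $\dr$) produces $1-\prod(1-y)$ over its $\dr-1$ incoming messages. The plan is to observe that, because both check types have identical degree $\dr$ and are fed by edges with identical coupling statistics, separating the incoming messages into ``type $1$'' and ``type $2$'' contributions and then recombining them at the variable node gives algebraically the same polynomial as treating all $\dl_1+\dl_2$ edges homogeneously. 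Concretely, the averaged check-to-variable erasure probability seen by a variable node at position $i$ is the same whether we compute one average over $\dl = \dl_1+\dl_2$ homogeneous edges or two separate averages over $\dl_1$ and $\dl_2$ edges sharing the same $\dr$ and the same window $w$. Hence the two recursions have identical fixed points, identical initialization, and therefore identical BP thresholds.

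\textbf{The main obstacle I anticipate is making the edge-type symmetry fully rigorous at the level of the random graph ensemble, not just the protograph.} One must check that the random permutations used to assign edges to sockets — a separate permutation over $M\dl_j$ letters for each type $j$ at each check position — produce, in the $M \to \infty$ limit, the same local neighborhood distribution (the same computation tree) as the single family of permutations in the $(\dl_1+\dl_2,\dr,L,w)$ ensemble. Since density evolution is exact in the limit of large $M$ (the local neighborhood converges to a tree and the $O(1/M)$ deviations from uniform coupling vanish), I would invoke the concentration/tree-convergence results underlying the analysis in \cite{KRU10} to conclude that the asymptotic BP performance depends only on the limiting degree and coupling parameters. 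Once the computation trees are shown to be distributed identically, the equality of BP thresholds is immediate. I would close by remarking that the same argument in fact shows the two ensembles have the same MAP threshold as well, though only the BP threshold is needed here.
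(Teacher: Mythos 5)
Your proposal is correct and follows essentially the same route as the paper: the paper writes out the two coupled density-evolution recursions for the type-$1$ and type-$2$ messages, notes that with $\dr_1=\dr_2=\dr$ and identical initialization one has $x_i^{(l,1)}=x_i^{(l,2)}$ by induction on $l$, and concludes that the recursion collapses to that of the $(\dl_1+\dl_2,\dr,L,w)$ ensemble from \cite{KRU10}. Your symmetry/indistinguishability framing and the remark on tree convergence are consistent with (and implicit in) the paper's argument, which simply makes the collapse explicit at the level of the DE equations.
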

\begin{proof}
Let $x^{(l,j)}_i$ be the average erasure probability which is emitted by a 
variable node at position $i$ in the $l^{\text{th}}$ iteration along an 
edge of type $j$, $j \in \{1, 2\}$. For $i \notin [-L, L]$, we set 
$x^{(l,j)}_i = 0$. For $i \in [-L, L]$, $j \in \{1, 2\}$, and $l=0$, we set 
$x^{(0,j)}_i = \epsilon$.   

As in \cite{KRU10}, the density evolution recursion for the $\{\dl_1, \dl_2, \dr, \dr, L, w\}$ 
two edge type spatially coupled LDPC ensemble is given by 
\begin{align}
x^{(l, 1)}_i & = \epsilon 
\brc{1-\frac{1}{w}\sum_{p=0}^{w-1} 
\brc{1-\frac{1}{w}\sum_{k=0}^{w-1}x^{(l-1,1)}_{i+p-k}}^{\dr-1}}^{\dl_1-1} \nonumber \\ 
 & \quad \brc{1-\frac{1}{w}\sum_{p=0}^{w-1} 
\brc{1-\frac{1}{w}\sum_{k=0}^{w-1}x^{(l-1,2)}_{i+p-k}}^{\dr-1}}^{\dl_2}, 
\end{align}
\begin{align}
x^{(l, 2)}_i & = \epsilon 
\brc{1-\frac{1}{w}\sum_{p=0}^{w-1} 
\brc{1-\frac{1}{w}\sum_{k=0}^{w-1}x^{(l-1,1)}_{i+p-k}}^{\dr-1}}^{\dl_1} \nonumber \\ 
 &  \quad \brc{1-\frac{1}{w}\sum_{p=0}^{w-1} 
\brc{1-\frac{1}{w}\sum_{k=0}^{w-1}x^{(l-1,2)}_{i+p-k}}^{\dr-1}}^{\dl_2-1}.
\end{align}
Here $x_i^{(l,1)}=x_i^{(l,2)}$ if $x_i^{(l-1,1)}=x_i^{(l-1,2)}$. Indeed, for
$l=1$ and $i \in [-L, L]$ , $x_i^{(1,1)}=x_i^{(1,2)}=\epsilon$ and for $i
\notin [-L, L]$, $x_i^{(1,1)}=x_i^{(1,2)}=0$. Thus, by induction on number of
iterations $l$, $x_i^{(l,1)}=x_i^{(l,2)}$. Hence we drop the superscript
corresponding to the type of edge and write the density evolution 
recursion as 
\begin{align} 
x_i^{(l)} = \epsilon 
\brc{1-\frac{1}{w}\sum_{p=0}^{w-1} 
\brc{1-\frac{1}{w}\sum_{k=0}^{w-1}x^{(l-1)}_{i+p-k}}^{\dr-1}}^{\dl_1+\dl_2-1}.  
\end{align}
This recursion is same as that of $\{\dl_1+\dl_2, \dr, L, w\}$ spatially 
coupled ensemble given in \cite{KRU10}. 
This proves the lemma.
\end{proof}

Before proving the main result, we show that regular two edge type
LDPC ensembles $\{\dl_1, \dl_2, \dr, \dr\}$ have the same growth rate
of the average stopping set distribution as that of the standard
regular $\{\dl_1+\dl_2,\dr\}$ LDPC ensemble.
\begin{lem}\label{lem:mindist}
  Consider the $\{\dl_1, \dl_2, \dr, \dr\}$ regular two edge type LDPC
  ensemble with blocklength $n$, $\dl_1 \geq 3$, and positive design
  rate.  Let $N(n, \omega n)$ be the stopping set distribution of a
  randomly chosen code from this ensemble and let $E(N(n, \omega n))$
  be its average.  Then the growth rate of $E(N(n, \omega n))$ is the
  same as that of the standard regular $\{\dl_1+\dl_2, \dr\}$
  ensemble. In particular, the minimum stopping set distance of the
  $\{\dl_1, \dl_2, \dr, \dr\}$ regular two edge type LDPC ensemble
  grows linearly in $n$.
\end{lem}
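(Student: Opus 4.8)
The plan is to compute the average stopping set distribution of both ensembles by the standard generating-function (first moment) method and to show that the resulting exponential growth rates coincide. Recall that a set $S$ of variable nodes is a stopping set if and only if no check node is connected to $S$ exactly once. For a single check node of degree $\dr$, the admissible configurations (zero or at least two incident edges from $S$) are enumerated by the polynomial $g(x) = \brc{1+x}^{\dr} - \dr x$, where the exponent of $x$ tracks the number of edges incident to $S$. Modeling each ensemble by uniformly random socket permutations, the probability that a fixed set $S$ of size $\omega n$ is a stopping set is obtained by distributing the $S$-edges over the check sockets subject to this admissibility condition and dividing by the total number of matchings.

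First I would write, for the standard $\{\dl_1+\dl_2, \dr\}$ ensemble,
\begin{equation}\label{eq:singlecount}
E\brc{N(n,\omega n)} = \binom{n}{\omega n}\frac{\text{coef}\{g(x)^{n(\dl_1+\dl_2)/\dr}, x^{\omega n(\dl_1+\dl_2)}\}}{\binom{n(\dl_1+\dl_2)}{\omega n(\dl_1+\dl_2)}},
\end{equation}
and then carry out the analogous count for the two edge type ensemble. The crucial structural observation is that the type $1$ and type $2$ edges are matched to their respective check sockets by two \emph{independent} random permutations. Hence, once $S$ is fixed, the event that no type $1$ check sees $S$ exactly once and the event that no type $2$ check does are independent, so the probability that $S$ is a stopping set factors. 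Since both check types share the degree $\dr$, each factor has exactly the single-type form with variable degree $\dl_i$, giving
\begin{equation}\label{eq:twocount}
E\brc{N(n,\omega n)} = \binom{n}{\omega n}\prod_{i=1}^{2}\frac{\text{coef}\{g(x)^{n\dl_i/\dr}, x^{\omega n\dl_i}\}}{\binom{n\dl_i}{\omega n\dl_i}}.
\end{equation}

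Next I would extract the exponential growth rate of each coefficient by the saddle-point (Hayman) method: $\text{coef}\{g(x)^m, x^a\}$ grows like $g(x^*)^m/(x^*)^a$ up to subexponential factors, where the saddle point solves $x^* g'(x^*)/g(x^*) = a/m$. For the type $i$ factor one has $a/m = \omega n\dl_i/\brc{n\dl_i/\dr} = \omega\dr$, which is \emph{independent of} $\dl_i$, and the same value arises for the combined degree $\dl_1+\dl_2$ in \eqref{eq:singlecount}. Thus all three coefficient extractions share a single saddle point $x^*(\omega)$, and writing $h$ for the binary entropy function the growth rate of \eqref{eq:twocount} becomes
\begin{equation}
-\brc{\dl_1+\dl_2-1} h(\omega) + \frac{\dl_1+\dl_2}{\dr}\log g(x^*) - \omega\brc{\dl_1+\dl_2}\log x^*,
\end{equation}
which is identical to the growth rate obtained from \eqref{eq:singlecount} for the standard $\{\dl_1+\dl_2, \dr\}$ ensemble. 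This establishes the equality of growth rates.

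Finally, the statement on minimum stopping set distance follows because, for the standard regular $\{\dl, \dr\}$ ensemble with $\dl = \dl_1+\dl_2 \geq \dl_1 \geq 3$, the growth rate is strictly negative on an interval $\brc{0, \omega^*}$; a first-moment argument then forces the minimum stopping set distance to be at least $\omega^* n$ with high probability, i.e.\ to grow linearly in $n$, and this transfers verbatim to the two edge type ensemble once the growth rates are shown equal. I expect the main obstacle to be the careful justification of the saddle-point asymptotics rather than the combinatorics: one must verify that the polynomial prefactors and the binomial normalizations combine exactly, and in particular that multiplying two coefficient extractions of equal density $\omega\dr$ reproduces, at exponential order, the single extraction at the combined degree. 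This equal-density property is precisely what the hypothesis $\dr_1 = \dr_2 = \dr$ buys; with distinct check degrees the saddle points would differ and the argument would break.
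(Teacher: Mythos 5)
Your proposal is correct and follows essentially the same route as the paper: the factorized first-moment count $\binom{n}{\omega n}\prod_{i=1}^{2}\text{coef}\{((1+x)^{\dr}-\dr x)^{n\dl_i/\dr},x^{\omega n\dl_i}\}/\binom{n\dl_i}{\omega n\dl_i}$, Stirling plus Hayman/saddle-point asymptotics with the shared saddle point solving $t\,((1+t)^{\dr-1}-1)/((1+t)^{\dr}-\dr t)=\omega$, and the observation that equal check degrees make the two-type growth rate collapse to that of the $\{\dl_1+\dl_2,\dr\}$ ensemble. The paper simply cites the known growth-rate formula and the linear minimum stopping set distance consequence for standard regular ensembles (\cite[Thm.~2, Cor.~7]{OVZ05}) where you sketch the first-moment tail argument, which also needs the standard handling of sublinear-size stopping sets that the cited corollary supplies.
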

\begin{proof}
  Using standard counting arguments we obtain
\begin{multline}
  E(N(n, \omega n)) = \\ \binom{n}{n \omega}
  \frac{\text{coef}\brc{p^{(\dr)}(x)^{\frac{\dl_1 n}{\dr}}, x^{\omega
        \dl_1 n}} \text{coef}\brc{p^{(\dr)}(x)^{\frac{\dl_2 n}{\dr}},
      x^{\omega \dl_2 n}}} {\binom{\dl_1 n}{\omega \dl_1 n}
    \binom{\dl_2 n}{\omega \dl_2 n}},
\end{multline}
where $p^{(\dr)}(x) = (1+x)^\dr - \dr x$. Using Stirling's
approximation for binomial terms and the Hayman expansion for the coef
term, see \cite[Appendix D]{RiU08}, we obtain
\begin{multline}\label{eq:stgrowth}
  \lim_{n \to \infty} \frac{\ln\brc{E(N(n, n \omega))}}{n} = (1-\dl_1-\dl_2) h(\omega) \\
  + \frac{\dl_1}{\dr} \ln\brc{p^{(\dr)}(t)} - \omega \dl_1 \ln(t) \\
  + \frac{\dl_2}{\dr} \ln\brc{p^{(\dr)}(t)} - \omega \dl_2 \ln(t),
\end{multline}
where $h(x)\triangleq -x\ln(x)-(1-x)\ln(1-x)$ is the binary entropy
function, all the logarithms are natural logarithms, and $t$ is a
positive solution of
\begin{align}
x \frac{(1+x)^{\dr-1}-1}{(1+x)^{\dr}-\dr x} & = \omega. 
\end{align}
From (\ref{eq:stgrowth}), we see that the growth rate is the same as
that of the average stopping set distribution of the standard
$\{\dl_1+\dl_2, \dr\}$ regular LDPC ensemble
\cite[Thm. 2]{OVZ05}. Now, the linearity of minimum stopping set
distance immediately follows from \cite[Cor. 7]{OVZ05}.
\end{proof} {\it Remark:} We could have come to this conclusion by
specializing the general result contained in
\cite[Thm. 5]{KADPS09}. But for the convenience of the reader, and
since the above proof is so short, we decided to include a complete
proof.

Lemma~\ref{lem:mindist} and \cite[Lemma 1]{KRU10} imply that $\{\dl_1,
\dl_2, \dr, \dr, L, w\}$ spatially coupled two edge type LDPC
ensembles with variable node degree at least three have a linear
minimum stopping set distance.  This gives us the following lemma on
the block error probability of the $\{\dl_1, \dl_2, \dr, \dr, L, w\}$
ensemble under iterative decoding.

\begin{lem}\label{lem:blockerror}
  Consider transmission over the BEC($\epsilon$) using the $\{\dl_1,
  \dl_2, \dr, \dr, L , w\}$, spatially coupled two edge type LDPC
  ensembles with BP threshold $\epsilon^*$ and blocklength $n$. Let
  $\dl_1 \geq 3$. Assume that $\epsilon < \epsilon^*$.  Denote by
  $P_e^{(B)}$ the block error probability under iterative decoding.
  Then
\[
 \lim_{n \to \infty} n P_e^{(B)} = 0. 
\]
\end{lem}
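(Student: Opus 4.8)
\emph{Proof proposal.} The plan is to use the fact that, over the BEC, belief-propagation (equivalently, peeling) decoding halts exactly at the largest stopping set $T$ contained in the set $\mathcal{E}$ of erased coordinates, and that decoding succeeds if and only if $T=\emptyset$. With $P_e^{(B)}$ denoting the block error probability averaged over the channel and the ensemble, I would fix a small constant $\gamma>0$ and split the failure event according to the size of the residual stopping set,
\begin{equation*}
P_e^{(B)} \leq \Pr\brc{|T|>\gamma n} + \Pr\brc{T\neq\emptyset,\ |T|\leq\gamma n}.
\end{equation*}
The two terms are governed by genuinely different mechanisms: the first (``waterfall'') by density evolution and concentration, the second (``error floor'') by the stopping-set count of Lemma~\ref{lem:mindist}.

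For the first term, since $\epsilon<\epsilon^{*}$ the density-evolution recursion of Lemma~\ref{lem:ensembleeq} converges to $0$, so there is a finite number of iterations $\ell$ after which the predicted residual erasure fraction is below $\gamma/2$. Because peeling is monotone, $|T|$ never exceeds the number of erasures left after $\ell$ iterations; invoking the standard concentration-around-density-evolution theorem for a fixed iteration count (see \cite{RiU08,KRU10}) then yields $\Pr\brc{|T|>\gamma n}\leq e^{-c n}$ for some $c=c(\gamma,\ell)>0$, which is $o(1/n)$. I would emphasize that this step cannot be replaced by a combinatorial bound: spatial coupling raises the BP threshold to the MAP threshold, which typically lies above the range of $\epsilon$ for which a union bound over all stopping sets is summable, so the bulk of the failure probability must be treated dynamically.

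The second term I would bound by a Markov/union estimate over erased stopping sets,
\begin{equation*}
\Pr\brc{T\neq\emptyset,\ |T|\leq\gamma n} \leq \sum_{s=1}^{\gamma n} E\brc{N(n,s)}\,\epsilon^{s}.
\end{equation*}
For $s=\omega n$ with $\omega$ bounded away from $0$ but below the first positive zero of the growth-rate curve, Lemma~\ref{lem:mindist} supplies a strictly negative exponent, so those terms are exponentially small. The delicate part is the constant-size sets $s=O(1)$, where the growth rate is uninformative and one must use the Stirling/Hayman estimate of the individual summands appearing in Lemma~\ref{lem:mindist}. Carrying this out, and crucially counting the two edge types separately, each such summand scales as a fixed negative power of $n$; the dominant contributions come from $s=1$ and $s=2$, which scale like $n^{1-\ceil{\dl_1/2}-\ceil{\dl_2/2}}$ and $n^{2-\dl_1-\dl_2}$, respectively. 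The hypothesis $\dl_1\geq 3$ together with $\dl_2\geq 1$ forces both exponents to be at most $-2$, so the whole sum is $O(n^{-2})=o(1/n)$. Adding the two bounds gives $\lim_{n\to\infty} n P_e^{(B)}=0$.

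The main obstacle I anticipate is precisely this error-floor estimate. The linear minimum stopping set distance established before the lemma only controls stopping sets of size $\Theta(n)$, whereas the $o(1/n)$ conclusion is decided by the \emph{constant-size} stopping sets, for which the exponent in $n$ must be shown to be strictly below $-1$. Obtaining the correct exponent---rather than the weaker one that results from treating each variable node as an ordinary degree-$(\dl_1+\dl_2)$ node---is exactly where the two-edge-type bookkeeping and the assumption $\dl_1\geq 3$ must be used carefully.
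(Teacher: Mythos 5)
Your proposal is correct, and its first half (density evolution drives the residual erasure fraction below any constant after finitely many iterations, and standard concentration makes the deviation event exponentially unlikely) is exactly the paper's argument. Where you genuinely diverge is in the treatment of the residual erasures. The paper disposes of them in one sentence: choose $\delta$ below the relative minimum stopping set distance guaranteed by Lemma~\ref{lem:mindist} together with Lemma~1 of \cite{KRU10}, and conclude that the peeling decoder finishes ``with probability $1$.'' It never writes down the union bound $\sum_{s}E\brc{N(n,s)}\epsilon^{s}$, nor the polynomial-in-$n$ exponents for constant-size stopping sets. Your explicit error-floor computation is therefore additional work relative to the printed proof, but it is well spent: the linear-minimum-stopping-set-distance statement is only a with-high-probability statement over the ensemble, and to get $nP_e^{(B)}\to 0$ one does need the probability that a sublinear nonempty stopping set exists and is erased to decay faster than $1/n$ --- which is precisely what your $s=1,2$ exponents $1-\ceil{\dl_1/2}-\ceil{\dl_2/2}\leq -2$ and $2-\dl_1-\dl_2\leq -2$ deliver, and where $\dl_1\geq 3$ (together with $\dl_2\geq 1$) enters quantitatively rather than just qualitatively. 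Two minor points to tidy up: the intermediate range $1\ll s\ll n$ is covered by neither your sketch nor the paper's and should be dispatched with the same first-moment bound (the exponents only become more negative); and since Lemma~\ref{lem:mindist} is stated for the uncoupled $\{\dl_1,\dl_2,\dr,\dr\}$ ensemble, the stopping-set counts for the coupled ensemble should be routed through Lemma~1 of \cite{KRU10}, as the paper does.
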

\begin{proof}
  In fact, a much stronger result is true -- the block error
  probability converges to $0$ exponentially fast. But for our purpose
  we only need that it converges to zero faster than linearly.

  To see why this is correct, fix $\epsilon < \epsilon^*$. Then, for
  any $\delta>0$, there exists an $l$ so that after $l$ iterations of
  DE, the bit error probability is below $\delta/3$. Further, for
  $n=n(l)$, sufficiently large, the expected behavior over all
  instances of the code and the channel deviates from the density
  evolution predictions by at most $\delta/3$.  Finally, by standard
  concentration results (see \cite[Thm. 3.30]{RiU08}) it follows that
  the probability that a particular instance deviates more than
  $\delta/3$ from its average decays exponentially fast in the
  blocklength.

  We summarize, with a probability which converges exponentially fast
  (in the blocklength) to $1$, an individual instance will have
  reached a bit error probability of at most $\delta$ after a fixed
  number of iterations.

  If $\delta$ is chosen sufficiently small, in particular smaller than
  the relative minimum stopping set distance, then we know that the
  decoder can correct the remaining erasures with probability $1$.
\end{proof}

In the following lemma we calculate the design rate of the spatially coupled two edge type ensemble.
\begin{lem}[Design Rate]
The design rate of the spatially coupled two edge type ensemble $(\{\dl_1,\dl_2,\dr_1,\dr_2,L,w\})$ with $w \leq 2L$ is given by
\begin{align}
&R(\dl_1,\dl_2,\dr_1,\dr_2,L,w) =\\
& \brc{1 - \frac{\dl_1}{\dr_1} - \frac{\dl_2}{\dr_2}} - \brc{\frac{\dl_1}{\dr_1} + \frac{\dl_2}{\dr_2}}\frac{w + 1 - 2 \sum_{i=0}^w \brc{\frac{i}{w}}^r}{2L + 1}.
\end{align}
The design rate of the coset encoding scheme for the wiretap channel is given by
\begin{equation}
  R_{\textnormal{des}} = \frac{\dl_2}{\dr_2} - \frac{\dl_2}{\dr_2} \frac{w+1-2\sum_{i=0}^w \brc{\frac{i}{w}}^r}{2L + 1}.
\end{equation}
\end{lem}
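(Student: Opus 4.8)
The plan is to compute the design rate exactly as in \cite{KRU10}, namely as $R=(n-C_{\mathrm{eff}})/n$, where $n=M(2L+1)$ is the number of variable nodes and $C_{\mathrm{eff}}$ is the expected number of check nodes that actually impose a constraint, exploiting the fact that in the ensemble of the definition the two edge types are coupled \emph{independently}. Concretely, I would treat each type $j\in\{1,2\}$ separately. A variable at position $i$ sends its $\dl_j$ type-$j$ edges into the window $[i,i+w-1]$, so type-$j$ check nodes occupy the $2L+w$ positions $[-L,L+w-1]$, with $\frac{\dl_j}{\dr_j}M$ of them at each position. The naive count $\frac{\dl_j}{\dr_j}M(2L+w)$ over-counts, because near the two ends many checks receive few or no edges; the entire content of the lemma is the $O(1/L)$ correction coming from these boundary checks.

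First I would split the type-$j$ check positions into \emph{interior} positions $[-L+w-1,L]$, at which all $w$ source windows carry variables so that every check has full degree $\dr_j$, and the two \emph{boundary} bands $[-L,-L+w-2]$ and $[L+1,L+w-1]$. The hypothesis $w\le 2L$ guarantees that these two bands are disjoint and that a full-degree interior region is present, so the left and right corrections decouple. Using the check-side model made precise in the remark after the ensemble definition (each socket is, up to an $O(1/M)$ deviation, an independent uniform choice among its $w$ left windows), a boundary check from which only $k$ of the $w$ windows carry variables is empty with probability $\brc{\frac{w-k}{w}}^{\dr_j}$. As $k$ runs through $1,\dots,w-1$ on each side, the expected number of empty type-$j$ checks is $\frac{\dl_j}{\dr_j}M\cdot 2\sum_{i=1}^{w-1}\brc{\frac{i}{w}}^{\dr_j}$.

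Then I would set $C_{\mathrm{eff}}$ to the total number of checks minus the expected number of empty ones, so that the effective type-$j$ count is $\frac{\dl_j}{\dr_j}M\brc{2L+w-2\sum_{i=1}^{w-1}\brc{\frac{i}{w}}^{\dr_j}}$. Writing $\sum_{i=1}^{w-1}\brc{\frac{i}{w}}^{\dr_j}=\sum_{i=0}^{w}\brc{\frac{i}{w}}^{\dr_j}-1$ turns $2L+w-2\sum_{i=1}^{w-1}$ into $(2L+1)+\brc{w+1-2\sum_{i=0}^{w}\brc{\frac{i}{w}}^{\dr_j}}$, which is exactly the form appearing in the statement. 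Subtracting both effective counts from $n$ and specialising to the common degree $\dr_1=\dr_2$ used in the statement (so the two sums coincide and $\frac{\dl_1}{\dr_1}+\frac{\dl_2}{\dr_2}$ factors out) reproduces the claimed expression for $R(\dl_1,\dl_2,\dr_1,\dr_2,L,w)$. For the coset scheme the transmitted message $\ul{S}$ has length $\mathrm{rank}(H)-\mathrm{rank}(H_1)$, i.e.\ it equals the effective number of type-$2$ checks alone, so $R_{\mathrm{des}}$ is obtained by keeping only the $j=2$ contribution and normalising by $n$, yielding the second displayed formula.

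The step I expect to be the main obstacle is justifying that $C_{\mathrm{eff}}$ is precisely (total checks)$\,-\,$(degree-$0$ checks): this requires arguing that every nonempty boundary check, \emph{including the degree-$1$ ones}, contributes one genuinely independent parity constraint, so that no further linear dependencies enter at the level of the design rate, and that the $O(1/M)$ deviation of the socket distribution from exactly uniform does not survive the limit. The cleanest way to discharge this point is to invoke the identical accounting already carried out for the single-type $(\dl,\dr,L,w)$ ensemble in \cite{KRU10}, to which our construction reduces type by type, and which moreover supplies the convergence argument for the $O(1/M)$ term.
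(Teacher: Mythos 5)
Your proposal follows essentially the same route as the paper: the paper's entire proof is to write $R=1-C_1/V-C_2/V$ and $R_{\textnormal{des}}=C_2/V$, where $C_j$ is the number of type-$j$ check nodes connected to at least one variable node, and to defer the counting to the proof of \cite[Lemma~3]{KRU10}; you simply carry that counting out explicitly (interior versus boundary check positions, expected number of degree-zero checks $\frac{\dl_j}{\dr_j}M\cdot 2\sum_{i=1}^{w-1}\brc{i/w}^{\dr_j}$ per type), and your numbers agree with the reference.

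There is, however, one step you should not have glossed over. Your own computation gives $C_2/V=\frac{\dl_2}{\dr_2}\brc{1+\frac{w+1-2\sum_{i=0}^{w}(i/w)^{r}}{2L+1}}$, i.e.\ a \emph{plus} sign on the finite-$L$ correction, yet you assert that this ``yields the second displayed formula,'' which carries a \emph{minus} sign. The two do not match. In fact the lemma's second display is inconsistent with its first: since $R_{\textnormal{des}}=C_2/V=\brc{1-C_1/V}-\brc{1-C_1/V-C_2/V}$, and the first display (with which your counting agrees) forces each $C_j/V$ to equal $\frac{\dl_j}{\dr_j}$ \emph{plus} a nonnegative boundary correction (the extra check positions $[L+1,L+w-1]$ etc.\ can only increase the number of constraints), the coset rate must also carry a plus. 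So your derivation is correct and the stated formula for $R_{\textnormal{des}}$ contains a sign typo; you should have flagged the discrepancy rather than claimed agreement. (The correction vanishes as $L\to\infty$, so nothing downstream is affected.) A minor further remark: the ``main obstacle'' you identify---that every nonempty boundary check contributes an independent constraint---is moot for this lemma, because the \emph{design} rate is defined by exactly this counting; the gap between design rate and true rate is acknowledged and handled separately in the paper immediately after the lemma.
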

\begin{IEEEproof}
  Let $C_1 (C_2)$ be the number of type one (two) check nodes connected to variable nodes and let $V$ be the number of variable nodes. Then $R(\dl_1,\dl_2,\dr_1,\dr_2,L,w) = 1 - C_1/V - C_2/V$ and $R_{\textnormal{des}} = C_2/V$. The calculations then follow from the proof of \cite[Lemma 3]{KRU10}.
\end{IEEEproof}

The number of possible messages $\ul{s}$ of the coset encoding scheme is given by the number of cosets of $G_n^{(1,2)}$ in $G^{(1)}_n$. For a standard LDPC ensemble the design rate is a lower bound on the rates of the codes in the ensemble. This is not true for the coset encoding scheme for the wiretap channel. For example, suppose the rate of $G^{(1)}_n$ equals the design rate, but the rate of $G_n^{(1,2)}$ is higher than its design rate. 
Then there will be fewer cosets than the maximum possible value. This corresponds to the equation
 \[
   \begin{bmatrix}
    H_1 \\ H_2
  \end{bmatrix} \ul{X} = [0 \cdots 0 \ul{S}]^T. 
\]
not having solutions for some $\ul{S}$.

Now, we are ready to state one of our main theorems. It shows that, by
spatial coupling of two edge type LDPC codes, we can achieve perfect
secrecy (the branch AB in Figure~\ref{fig:rate_equ}), and in particular the secrecy capacity (the point B in Figure~\ref{fig:rate_equ}) of the
binary erasure wiretap channel.

\begin{thm}\label{thm:seccap}
  Consider transmission over the BEC-WT($\epsilon_m, \epsilon_w$)
  using spatially coupled regular $\{\dl_1, \dl_2, \dr, \dr, L, w\}$
  two edge type LDPC ensemble.  Assume that the desired rate of
  information transmission from Alice to Bob is $R$, $R \leq
  C_m-C_w$. Let $\dl_1=\ceil{(1-C_w-R) \dr}$ and $\dl_2=\ceil{(1-C_w)
    \dr}-\ceil{(1-C_w-R) \dr}$.  Let $R_e$ be the average (over the
  channel and ensemble) equivocation achieved for the wiretapper.
  Then,
\[
\lim_{\dr \to \infty} \lim_{w \to \infty} \lim_{L \to \infty} \lim_{M \to \infty}\mathbb{E}\brc{P_e(G_n)} = 0,
\]
\[
\lim_{\dr \to \infty} \lim_{w \to \infty} \lim_{L \to \infty} \lim_{M \to \infty} R_e = R.
\]
Let $R(G_n)$ be the rate from Alice to Bob of a randomly chosen code
in the ensemble. Then
\[
\lim_{\dr \to \infty} \lim_{w \to \infty} \lim_{L \to \infty} \lim_{M \to \infty} \textnormal{Pr}(R(G_n) < R) = 0.
\]
\end{thm}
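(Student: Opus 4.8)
\emph{Overall strategy.} The plan is to prove the three assertions separately, driving all of them with two facts. First, by Lemma~\ref{lem:ensembleeq} the combined code $G_n^{(1,2)}$ (parity-check $H$) has the same threshold as the single-edge-type spatially coupled ensemble $(\dl_1+\dl_2,\dr,L,w)$, and the same reasoning applied to $H_1$ alone shows that $G_n^{(1)}$ behaves like $(\dl_1,\dr,L,w)$. Second, by \cite{KRU10} the BP threshold of such an ensemble tends, as $L,w\to\infty$, to the MAP threshold of the underlying regular ensemble, which in turn tends to its Shannon limit as $\dr\to\infty$. With the stated degrees one has $\dl_1/\dr\to\epsilon_w-R$ and $(\dl_1+\dl_2)/\dr\to\epsilon_w$, and since $R\le C_m-C_w=\epsilon_w-\epsilon_m$ we get $\epsilon_w-R\ge\epsilon_m$. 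Also $\dl_1=\ceil{(\epsilon_w-R)\dr}\ge 3$ once $\dr$ is large (because $\epsilon_w-R\ge\epsilon_m>0$), so Lemmas~\ref{lem:mindist}--\ref{lem:blockerror} are available throughout.

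\emph{Reliability.} Bob knows $H_1\ul X=0$, i.e.\ $\ul X\in G_n^{(1)}$, and recovers $\ul S=H_2\ul X$ as soon as he recovers $\ul X$ over BEC$(\epsilon_m)$. Since $G_n^{(1)}$ is the spatially coupled $(\dl_1,\dr,L,w)$ ensemble, its BP threshold tends to $\dl_1/\dr\to\epsilon_w-R\ge\epsilon_m$. Hence, in the nested limit, $\epsilon_m$ lies below threshold, and the argument of Lemma~\ref{lem:blockerror} (which rests only on the linear minimum stopping-set distance of Lemma~\ref{lem:mindist} and \cite{KRU10}) applied to $G_n^{(1)}$ gives $\lim_M\mathbb{E}\brc{P_e(G_n)}=0$; the remaining limits preserve this and yield the first claim.

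\emph{Secrecy.} I would first show that $G_n^{(1,2)}$ is capacity achieving over BEC$(\epsilon_w)$: its design rate tends to $1-\epsilon_w=C_w$ by the design-rate lemma, and by the engine above its BP threshold tends to $\epsilon_w$. The secrecy result of \cite{TDCMM07} then gives perfect secrecy, and to identify the value I would make the equivocation explicit. Writing $\mathcal E$ for the wiretapper's erasure set and $H^{\mathcal E},H_1^{\mathcal E}$ for the columns of $H,H_1$ indexed by $\mathcal E$, a standard rank computation gives $H\brc{\ul S\mid\ul Z}=\mathrm{rank}\,H^{\mathcal E}-\mathrm{rank}\,H_1^{\mathcal E}$ (in bits). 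When $G_n^{(1,2)}$ is capacity achieving, a typical $\mathcal E$ (of weight $\approx\epsilon_w n$) contains the support of no nonzero codeword, so $\mathrm{rank}\,H^{\mathcal E}=\lvert\mathcal E\rvert$; combined with $\mathrm{rank}\,H_1^{\mathcal E}\le \dl_1 n/\dr$ this gives $R_e\ge\epsilon_w-\dl_1/\dr-o(1)\to R$, while trivially $R_e\le R(G_n)\le\dl_2/\dr\to R$. The probability mass of the atypical $\mathcal E$ is at most the block error probability, which vanishes by Lemma~\ref{lem:blockerror}, so $R_e\to R$.

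\emph{Rate, and the main obstacle.} Since $R(G_n)=\tfrac1n\brc{\mathrm{rank}\,H-\mathrm{rank}\,H_1}$, the plan is to lower-bound $\mathrm{rank}\,H$ through decodability: averaging Lemma~\ref{lem:blockerror} over the ensemble and applying Markov's inequality, with probability tending to one a randomly drawn code has channel-averaged block error $<1$, so some pattern of weight $\approx\epsilon_w n$ is decodable and therefore $\mathrm{rank}\,H\ge\epsilon_w n\brc{1-o(1)}$. With $\mathrm{rank}\,H_1\le \dl_1 n/\dr$ this gives $R(G_n)\ge\epsilon_w-\dl_1/\dr-o(1)\to R$, i.e.\ $\mathrm{Pr}\brc{R(G_n)<R-\delta}\to 0$ for every $\delta>0$, which is the third claim (the $o(1)$ term and the ceiling slack in $\dl_1/\dr$ absorbing the gap to $R$). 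I expect the genuine difficulty to be precisely this threshold boundary: the construction needs the BP threshold of $G_n^{(1)}$ to exceed $\epsilon_m$ and that of $G_n^{(1,2)}$ to reach $\epsilon_w$, yet for every finite $\dr$ the relevant MAP threshold lies strictly below its Shannon limit. This is comfortable whenever $R<C_m-C_w$, where the strict gap $\epsilon_w-R>\epsilon_m$ gives room, but at the corner $R=C_m-C_w$ it must be extracted from the convergence of the MAP threshold to the Shannon limit as $\dr\to\infty$, together with the ceiling in the definition of $\dl_1,\dl_2$; carrying out this limiting comparison cleanly is the crux of the proof.
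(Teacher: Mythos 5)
Your proposal is correct to the same standard as the paper, and while the reliability part coincides with the paper's argument (both reduce to the type-$1$ subcode being a capacity-achieving $(\dl_1,\dr,L,w)$ ensemble with threshold above $\epsilon_m$), your treatment of the equivocation and the rate takes a genuinely different route. The paper argues information-theoretically: it expands $I(\ul{X},\ul{S};\ul{Z})$ by the chain rule in two ways to obtain $\frac{1}{n}H(\ul{S}\mid\ul{Z})\ge\frac{1}{n}\brc{H(\ul{X})-H(\ul{X}\mid\ul{Z},\ul{S})}-C_w$, identifies $H(\ul{X})/n$ with the rate of the type-$1$ code (hence $\to R+C_w$ since that code is capacity achieving), and kills $H(\ul{X}\mid\ul{Z},\ul{S})/n$ by Fano's inequality together with decodability of $G_n^{(1,2)}$ over BEC$(\epsilon_w)$; the rate claim is obtained separately from $R(G_n)=R(G_n^{(1)})-R(G_n^{(1,2)})$ with both component rates converging to their design rates. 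Your exact identity $H(\ul{S}\mid\ul{Z})=\mathrm{rank}\,H^{\mathcal{E}}-\mathrm{rank}\,H_1^{\mathcal{E}}$ is correct (it is the computation of \cite{RATKS10}, which the paper itself invokes only in its numerical section) and rests on the same two ingredients in linear-algebraic clothing: full column rank of $H^{\mathcal{E}}$ on typical erasure patterns is precisely the decodability that the paper feeds into Fano, and $\mathrm{rank}\,H_1\le \dl_1 n/\dr$ is the ``rate equals design rate'' statement. Your route has the advantage of yielding the matching upper bound $R_e\le \dl_2/\dr$ and the rate claim $R(G_n)=\brc{\mathrm{rank}\,H-\mathrm{rank}\,H_1}/n$ from the same computation; the paper's route is shorter and avoids conditioning on the erasure pattern.

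On the ``main obstacle'' you flag: you are right that the threshold boundary is the delicate point, and you should know that the paper does not actually resolve it. Its reliability argument invokes ``$R<C_m-C_w$'' even though the statement allows equality, and its secrecy argument needs $G_n^{(1,2)}$, whose designed erasure threshold is at most $\ceil{(1-C_w)\dr}/\dr$, to be decodable at exactly $\epsilon_w$ for each finite $\dr$ (inside the innermost limits); this requires the ceiling slack to exceed the gap between the MAP threshold and the Shannon limit, and that slack is zero whenever $(1-C_w)\dr$ is an integer. So the difficulty you isolate is present on the secrecy side for every $R$, not only at the corner $R=C_m-C_w$, and closing it cleanly would require either perturbing the target rates by a $\delta$ that is sent to zero last, or replacing Fano by an area-theorem bound on $H(\ul{X}\mid\ul{S},\ul{Z})/n$, which stays small even slightly above threshold. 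Your write-up is at least as honest about this as the paper; just be aware that the ceiling slack alone does not ``absorb the gap'' in general.
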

\begin{IEEEproof}
  We first show that the rate from Alice to Bob is $R$ almost
  surely. Let $G_n^{(1,2)}$ be a two edge type spatially coupled code,
  and let $G_n^{(1)}$ be the code induced by its type 1 edges
  only. Then
  \begin{equation}
    R(G_n) = R(G^{(1)}_n) - R(G^{(1,2)}_n).
  \end{equation}
  Since both the two edge type spatially coupled ensemble and the
  ensemble induced by its type 1 edges are capacity achieving we must
  have
\begin{align}
  \lim_{\dr \to \infty} \lim_{w \to \infty} \lim_{L \to \infty} \lim_{M \to \infty} \textnormal{Pr}(R(G_n^{(1)}) > C_w + R) = 0, \\
  \lim_{\dr \to \infty} \lim_{w \to \infty} \lim_{L \to \infty} \lim_{M \to \infty} \textnormal{Pr}(R(G_n^{(1,2)}) > C_w) = 0.
\end{align}
This implies
\begin{equation}
  \lim_{\dr \to \infty} \lim_{w \to \infty} \lim_{L \to \infty} \lim_{M \to \infty} \textnormal{Pr}(R(G_n) < R) = 0.
\end{equation}
The reliability part easily follows from the capacity achieving
property of the spatially coupled ensemble.  This is because the rate
of the ensemble corresponding to type $1$ edges approaches $C_w+R$. As
this ensemble is capacity achieving, its threshold is $1-C_w-R$. As $R
< C_m-C_w$, we see that the threshold is greater than
$\epsilon_m$. This proves reliability.

To bound the equivocation of Eve, using the chain rule we expand the mutual information 
$I(\ul{X}, \ul{S}; \ul{Z})$ in two different ways
\begin{align}
  I(\ul{X}, \ul{S}; \ul{Z}) & = I(\ul{X}; \ul{Z}) + I(\ul{S}; \ul{Z} \mid \ul{X}) \\
  & = I(\ul{S}; \ul{Z}) + I(\ul{X}; \ul{Z} \mid \ul{S}).
\end{align}
As $\ul{S} \to \ul{X} \to \ul{Z}$ is a Markov chain, $I\brc{\ul{S};
  \ul{Z} \mid \ul{X}}=0$. Using $I(\ul{S}; \ul{Z}) =
H(\ul{S})-H(\ul{S} \mid \ul{Z})$, we obtain,
\begin{align}
  \frac{1}{n}H(\ul{S} \mid \ul{Z}) & = \frac{1}{n} \brc{H(\ul{S})+I(\ul{X}; \ul{Z} \mid \ul{S})-I(\ul{X}; \ul{Z})} \\
  & = \frac{1}{n} \brc{H(\ul{S})+H(\ul{X} \mid \ul{S})-H(\ul{X} \mid \ul{Z}, \ul{S})} \nonumber \\
  & \phantom{=}  -\frac{I(\ul{X}; \ul{Z})}{n} \\
  & \geq \frac{1}{n} \brc{H(\ul{X})-H(\ul{X} \mid \ul{Z}, \ul{S})} -
  C_w, \label{eq:equrelation}
\end{align}
where we have used that $H(\ul{S})+H(\ul{X} \mid \ul{S}) =
H(\ul{S},\ul{X}) = H(\ul{X})$ and that $I(\ul{X};\ul{Z})/n \leq C_w$.

Since the ensemble induced by type $1$ edges is capacity
achieving its rate must equal its design rate asymptotically, so
\begin{equation}\label{eq:H(X)}
  \lim_{\dr \to \infty} \lim_{w \to \infty} \lim_{L \to \infty} \lim_{n \to \infty} H(\ul{X})/n = R + C_w.
\end{equation}

Denote the block error probability of decoding $\ul{X}$ from $\ul{Z}$
and $\ul{S}$ by $P_e(\ul{X} \mid \ul{S}, \ul{Z})$. From Fano's
inequality we obtain,
\begin{align}\label{eq:fano}
  \frac{H(\ul{X} \mid \ul{S}, \ul{Z})}{n}\leq \frac{h(P_e(\ul{X} \mid
    \ul{S}, \ul{Z}))}{n} + P_e(\ul{X} \mid \ul{S},
  \ul{Z})(1-\epsilon_w).
\end{align}
Note that, as the two edge type spatially coupled construction is
capacity achieving over the wiretapper's channel, $\lim_{\dr \to
  \infty} \lim_{w \to \infty} \lim_{L \to \infty} \lim_{M \to \infty}
P_e(\ul{X} \mid \ul{S}, \ul{Z}) = 0$.

We now obtain the desired bound on the equivocation by substituting
(\ref{eq:fano}) and (\ref{eq:H(X)}) in (\ref{eq:equrelation}), and
taking the limit $\dr,w,L,M \to \infty$.
\end{IEEEproof}

Note that in the previous theorem our requirement was to have perfect
secrecy.  Hence we constructed spatially coupled two edge type matrix
such that it was capacity achieving over the wiretapper's channel. In
the next theorem we prove that using spatially coupled two edge LDPC
codes, it is possible to achieve an information rate equal to $C_m$,
the capacity of the main channel, and equivocation equal to
$C_m-\epsilon_w$.

\begin{thm}
  Consider transmission over the BEC-WT($\epsilon_m, \epsilon_w$)
  using spatially coupled regular $\{\dl_1, \dl_2, \dr, \dr, L, w\}$
  two edge type LDPC ensemble.  Assume that the desired rate of
  information transmission from Alice to Bob is $R$, $R > C_m-C_w$ and
  $R \leq C_m$. Let $\dl_1=\ceil{(1-C_m) \dr}$ and $\dl_2=\ceil{R
    \dr}$.  Let $R_e$ be the average (over the channel and ensemble)
  equivocation achieved for the wiretapper.  Then,
\[
\lim_{\dr \to \infty} \lim_{w \to \infty} \lim_{L \to \infty} \lim_{M \to \infty}\mathbb{E}\brc{P_e(G_n)} = 0,
\]
\[
\lim_{\dr \to \infty} \lim_{w \to \infty} \lim_{L \to \infty} \lim_{M \to \infty} R_e = C_m - C_w.
\]
Let $R(G_n)$ be the rate from Alice to Bob of a randomly chosen code
in the ensemble. Then
\[
\lim_{\dr \to \infty} \lim_{w \to \infty} \lim_{L \to \infty} \lim_{M \to \infty} \textnormal{Pr}(R(G_n) < R) = 0.
\]
\end{thm}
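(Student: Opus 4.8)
The plan is to follow the structure of the proof of Theorem~\ref{thm:seccap}, adapting it to the new degree choices. With $\dl_1=\ceil{(1-C_m)\dr}$ and $\dl_2=\ceil{R\dr}$, the ensemble $G_n^{(1)}$ induced by the type~$1$ edges is the spatially coupled $(\dl_1,\dr,L,w)$ ensemble, whose design rate tends to $1-\dl_1/\dr\to 1-(1-C_m)=C_m$, while the full two edge type ensemble $G_n^{(1,2)}$ has variable degree $\dl_1+\dl_2$ and design rate tending to $1-(\dl_1+\dl_2)/\dr\to C_m-R$ (nonnegative since $R\le C_m$). First I would dispose of the rate claim exactly as before: since both ensembles are capacity achieving their actual rate equals their design rate almost surely, so $\textnormal{Pr}(R(G_n^{(1)})>C_m)\to 0$ and $\textnormal{Pr}(R(G_n^{(1,2)})>C_m-R)\to 0$; combined with the deterministic lower bound (actual rate $\ge$ design rate) and $R(G_n)=R(G_n^{(1)})-R(G_n^{(1,2)})$, this yields $\textnormal{Pr}(R(G_n)<R)\to 0$.

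Reliability is immediate: $G_n^{(1)}$ has rate $\to C_m=1-\epsilon_m$ and is capacity achieving over the main channel, so Bob recovers $\ul{X}$, and hence $\ul{S}=H_2\ul{X}$, with vanishing error probability over $\textnormal{BEC}(\epsilon_m)$.

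For the equivocation I would not pass to an inequality but keep the identity underlying the proof of Theorem~\ref{thm:seccap}, namely
\[
\frac{1}{n}H(\ul{S}\mid\ul{Z})=\frac{1}{n}\brc{H(\ul{X})-H(\ul{X}\mid\ul{Z},\ul{S})}-\frac{1}{n}I(\ul{X};\ul{Z}),
\]
and evaluate each term in the iterated limit. The first obeys $H(\ul{X})/n\to C_m$, since $\ul{X}$ is uniform on $G_n^{(1)}$. The second vanishes by Fano's inequality together with a genie argument: the full code $G_n^{(1,2)}$ has threshold $1-(C_m-R)=\epsilon_m+R$, and the hypothesis $R>C_m-C_w$ is exactly $\epsilon_m+R>\epsilon_w$, so given $\ul{S}$ Eve's channel $\textnormal{BEC}(\epsilon_w)$ lies below threshold, $P_e(\ul{X}\mid\ul{S},\ul{Z})\to 0$, and thus $H(\ul{X}\mid\ul{Z},\ul{S})/n\to 0$. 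The genuinely new ingredient is the third term, for which I claim $I(\ul{X};\ul{Z})/n\to C_w$. The bound $\le C_w$ is just the capacity of $\textnormal{BEC}(\epsilon_w)$, exactly what Theorem~\ref{thm:seccap} used; for the matching lower bound I would invoke the capacity achieving property of $G_n^{(1)}$ once more, now relative to Eve's channel. Since its MAP threshold is $\epsilon_m<\epsilon_w$ and its MAP EXIT curve is asymptotically the step $\mathbf 1\{\epsilon>\epsilon_m\}$, the area theorem gives $H(\ul{X}\mid\ul{Z})/n\to\int_{0}^{\epsilon_w}\mathbf 1\{s>\epsilon_m\}\ud s=\epsilon_w-\epsilon_m=C_m-C_w$, so $I(\ul{X};\ul{Z})/n=H(\ul{X})/n-H(\ul{X}\mid\ul{Z})/n\to C_w$. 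Substituting the three limits into the identity gives $R_e=C_m-0-C_w=C_m-C_w$.

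The main obstacle is precisely this third term, i.e.\ the lower bound on $I(\ul{X};\ul{Z})$: it is the assertion that on the branch BC an amount $R-(C_m-C_w)$ of rate necessarily leaks to Eve. Unlike in Theorem~\ref{thm:seccap}, where the trivial bound $I(\ul{X};\ul{Z})/n\le C_w$ sufficed and $R_e\le R$ came for free, here one must show that the leakage is this large. I would establish it through the EXIT/area-theorem characterisation of the capacity achieving type~$1$ ensemble sketched above; alternatively, one may pair the achievability bound $\liminf R_e\ge C_m-C_w$ (obtained as in Theorem~\ref{thm:seccap}, using only $I(\ul{X};\ul{Z})/n\le C_w$ in the identity) with the general converse $R_e\le C_m-C_w$ of \cite{CsK78}, which immediately yields $\lim R_e=C_m-C_w$.
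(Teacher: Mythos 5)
Your proposal is correct and, in its essentials, coincides with the paper's proof: the rate and reliability arguments are identical, and the equivocation bound rests on the same identity $H(\ul{S}\mid\ul{Z})=H(\ul{X})-H(\ul{X}\mid\ul{Z},\ul{S})-I(\ul{X};\ul{Z})$ together with Fano applied to decoding $\ul{X}$ from $(\ul{S},\ul{Z})$ over BEC$(\epsilon_w)$, using that the full code has rate $C_m-R$ and hence threshold $1-C_m+R>\epsilon_w$. The one place you diverge is the term $I(\ul{X};\ul{Z})/n$: the paper simply bounds it by $C_w$, obtains $\liminf R_e\geq C_m-C_w$, and leaves the matching upper bound to the outer converse of \cite{CsK78} (exactly your ``alternative'' route, which you correctly identify as sufficient). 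Your primary route instead pins down the leakage exactly, $I(\ul{X};\ul{Z})/n\to C_w$, via the area theorem and the step-function limit of the MAP EXIT curve of the capacity-achieving type-$1$ ensemble. That argument is sound and has the merit of making the equivocation limit self-contained (no appeal to the converse), but it imports heavier machinery than the paper needs; the cheaper converse-based route is what the paper implicitly uses.
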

\begin{proof}
  The proof that the rate is $R$ asymptotically is the same as in the
  proof of Theorem~\ref{thm:seccap}.

  The reliability part easily follows from the capacity achieving
  property of the spatially coupled ensemble corresponding to type $1$
  edges.  This is because the rate of the ensemble corresponding to
  type $1$ edges approaches $C_m$. As this ensemble is capacity
  achieving, its threshold is $\epsilon_m$. This proves reliability.

  The proof for equivocation is very similar to that of
  Theorem~\ref{thm:seccap}.  From (\ref{eq:equrelation}), we know
  \begin{align}
    \frac{1}{n}H(\ul{S} \mid \ul{Z}) & \geq \frac{1}{n}
    \brc{H(\ul{X})-H(\ul{X} \mid \ul{Z}, \ul{S})} -
    C_w.  \label{eq:equrelationthm2}
  \end{align}
  Since the code induced by type $1$ edges is capacity achieving we
  have
  \begin{equation}\label{eq:thm2t1}
    \lim_{\dr \to \infty} \lim_{w \to \infty} \lim_{L \to \infty} \lim_{n \to \infty} H(\ul{X})/n = C_m.
  \end{equation}
  Note that as the two edge type code has rate $C_m-R$ and is capacity
  achieving, its threshold for the BEC is $1-C_m+R$. As $R>C_m-C_w$,
  the threshold is higher than $\epsilon_w$. As in
  Theorem~\ref{thm:seccap}, given $\ul{S}$ the error probability of
  decoding $\ul{X}$ from $\ul{Z}$, denoted by, $P_e(\ul{X} \mid
  \ul{S}, \ul{Z})$ goes to zero.  Thus (\ref{eq:fano}) holds and we
  obtain 
\begin{align}\label{eq:thm2t2}
  \lim_{\dr \to \infty} \lim_{w \to \infty} \lim_{L \to \infty} \lim_{M \to \infty} \frac{H(\ul{X} \mid \ul{S}, \ul{Z})}{n} = 0.
\end{align}
We obtain the desired bound on the equivocation by substituting
(\ref{eq:thm2t1}) and (\ref{eq:thm2t2}) in (\ref{eq:equrelationthm2}),
and taking the limit $\dr,w,L,M \to \infty$.
\end{proof}

\section{Numerical Results}
We have rigorously shown the optimality of the $\{\dl_1, \dl_2, \dr_1, \dr_2, L, w\}$ ensemble. 
In this section, we briefly discuss the performance of the 
$\{\dl_1, \dl_2, \dr_1, \dr_2, L\}$ ensemble, which is the two 
edge type extension of the $\{\dl, \dr, L\}$ ensemble discussed in \cite[Sec. II.A]{KRU10}.
Based on the method in \cite{RATKS10}, we numerically evaluate the  
equivocation 
 of  the $\{3, 3, 6, 12, L\}$ ensemble for the BEC-WT($0.5, 0.75$). The results are 
given in Table \ref{tab:eqv_L}.  
We observe that as $L$ increases, the equivocation $R_e$ 
converges to $R$, the rate from Alice to Bob. Thus, the optimality of secrecy performance of the
$\{\dl_1, \dl_2, \dr_1, \dr_2, L\}$ ensemble seems to hold for the wiretap channel. 
The optimality of reliability performance has been conjectured to hold in \cite{KRU10}. 
\begin{table}[ht]
\centering 
\begin{tabular} {|c|c|c|c|c|c|c|}
\hline
$L$ & $20$ & $30$ & $40$ & $50$ & $60$ & $70$\\ 
\hline 
$R$ & $0.2622$  & $0.2582$ & $0.2562$  & $0.255$ & $0.2541$ & $0.2535$\\
\hline
$R_e$ & $0.2276$ & $0.235$ & $0.2387$ & $0.241$ & $0.2425$ & $0.2436$\\
\hline
\end{tabular}
\caption{\label{tab:eqv_L} Rate from Alice to Bob ($R$) and equivocation of Eve ($R_e$) 
for different values of $L$, $M=1000$ for $\{3, 3, 6, 12, L\}$ ensemble.}
\end{table}
\vspace{-0.2in}

\section{Conclusion}
We showed how to achieve the whole rate-equivocation region using spatially 
coupled regular two edge type LDPC codes over the binary erasure wiretap channel. 
As the spatially coupled two edge type 
LDPC codes are conjectured to achieve capacity over general BMS channels, we 
conjecture that our code construction is also universally optimal for 
the class of wiretap channel where the main channel and wiretapper's channel are 
BMS channels and wiretapper's channel is physically degraded with respect to the main 
channel. 

\bibliography{secret}{}

\begin{thebibliography}{10}
\providecommand{\url}[1]{#1}
\csname url@rmstyle\endcsname
\providecommand{\newblock}{\relax}
\providecommand{\bibinfo}[2]{#2}
\providecommand\BIBentrySTDinterwordspacing{\spaceskip=0pt\relax}
\providecommand\BIBentryALTinterwordstretchfactor{4}
\providecommand\BIBentryALTinterwordspacing{\spaceskip=\fontdimen2\font plus
\BIBentryALTinterwordstretchfactor\fontdimen3\font minus
  \fontdimen4\font\relax}
\providecommand\BIBforeignlanguage[2]{{%
\expandafter\ifx\csname l@#1\endcsname\relax
\typeout{** WARNING: IEEEtran.bst: No hyphenation pattern has been}%
\typeout{** loaded for the language `#1'. Using the pattern for}%
\typeout{** the default language instead.}%
\else
\language=\csname l@#1\endcsname
\fi
#2}}

\bibitem{Wyn75}
A.~D. Wyner, ``The wire-tap channel,'' \emph{Bell. Syst. Tech. J.}, vol.~54,
  no.~8, pp. 1355--1387, Oct. 1975.

\bibitem{CsK78}
I.~Csisz{\'{a}}r and J.~K{\"{o}}rner, ``Broadcast channels with confidential
  messages,'' \emph{Information Theory, IEEE Transactions on}, vol.~24, no.~3,
  pp. 339 -- 348, may 1978.

\bibitem{MaW00}
U.~Maurer and S.~Wolf, ``Information-theoretic key agreement: from weak to
  strong secrecy for free,'' in \emph{EUROCRYPT'00: Proceedings of the 19th
  international conference on Theory and application of cryptographic
  techniques}.\hskip 1em plus 0.5em minus 0.4em\relax Berlin, Heidelberg:
  Springer-Verlag, 2000, pp. 351--368.

\bibitem{LPS09}
Y.~Liang, H.~V. Poor, and S.~Shamai~(Shitz), ``Information theoretic
  security,'' \emph{Foundations and Trends® in Communications and Information
  Theory}, vol.~5, no. 4-5, pp. 355--580, 2009.

\bibitem{Ari09}
E.~Arikan, ``Channel polarization: A method for constructing capacity-achieving
  codes for symmetric binary-input memoryless channels,'' \emph{Information
  Theory, IEEE Transactions on}, vol.~55, no.~7, pp. 3051 --3073, july 2009.

\bibitem{ARTKS10}
M.~Andersson, V.~Rathi, R.~Thobaben, J.~Kliewer, and M.~Skoglund, ``Nested
  polar codes for wiretap and relay channels,'' \emph{Communications Letters,
  IEEE}, vol.~14, no.~8, pp. 752 --754, aug. 2010.

\bibitem{HoS10}
E.~{Hof} and S.~{Shamai}, ``{Secrecy-Achieving Polar-Coding for Binary-Input
  Memoryless Symmetric Wire-Tap Channels},'' \emph{ArXiv e-prints}, May 2010.

\bibitem{MaV10}
H.~{Mahdavifar} and A.~{Vardy}, ``{Achieving the Secrecy Capacity of Wiretap
  Channels Using Polar Codes},'' in \emph{International Symposium on
  Information Theory (ISIT), Austin, USA}, jun 2010.

\bibitem{OzG10}
O.~{O. Koyluoglu} and H.~{El Gamal}, ``{Polar Coding for Secure Transmission
  and Key Agreement},'' \emph{ArXiv e-prints}, Mar. 2010.

\bibitem{FeZ99}
A.~J. Felstrom and K.~S. Zigangirov, ``Time-varying periodic convolutional
  codes with low-density parity-check matrix,'' \emph{Information Theory, IEEE
  Transactions on}, vol.~45, no.~6, pp. 2181 --2191, sep 1999.

\bibitem{EnZ99}
K.~Engdahl and K.~S. Zigangirov, ``On the theory of low-density convolutional
  codes {I},'' \emph{Problemy Peredachi Informatsii}, vol.~35, pp. 12--27,
  1999.

\bibitem{LTZ01}
M.~Lentmaier, D.~V. Truhachev, and K.~S. Zigangirov, ``On the theory of
  low-density convolutional codes {II},'' \emph{Problm. Inf. Transm.}, vol.~35,
  pp. 12--27, 1999.

\bibitem{TSSFC04}
R.~M. Tanner, D.~Sridhara, A.~Sridharan, T.~E. Fuja, and D.~J. Costello, Jr.,
  ``{LDPC} block and convolutional codes based on circulant matrices,''
  \emph{IEEE Trans. Inform. Theory}, vol.~50, pp. 2966--2984, 2004.

\bibitem{SLCZ04}
A.~Sridharan, M.~Lentmaier, D.~J. Costello, Jr., and K.~S. Zigangirov,
  ``Convergence analysis of a class of {LDPC} convolutional codes for the
  erasure channel,'' in \emph{Proc. of the Allerton Conf. on Commun., Control,
  and Computing}, Monticello, IL, USA, Oct. 2004.

\bibitem{LSZC05}
M.~Lentmaier, A.~{Sridharan}, K.~S. Zigangirov, and D.~J. Costello, Jr.,
  ``{Terminated {LDPC} Convolutional Codes with Thresholds Close to
  Capacity},'' \emph{ArXiv Computer Science e-prints}, Aug. 2005.

\bibitem{LFZC09}
M.~Lentmaier, G.~P. Fettweis, K.~S. Zigangirov, and D.~J. Costello, Jr.,
  ``Approaching capacity with asymptotically regular {LDPC} codes,'' in
  \emph{Information Theory and Applications Workshop, 2009}, 8-13 2009, pp. 173
  --177.

\bibitem{KRU10}
S.~{Kudekar}, T.~{Richardson}, and R.~{Urbanke}, ``{Threshold Saturation via
  Spatial Coupling: Why Convolutional {LDPC} Ensembles Perform so well over the
  {BEC}},'' \emph{ArXiv e-prints}, Jan. 2010.

\bibitem{KMRU10}
S.~{Kudekar}, C.~{Measson}, T.~{Richardson}, and R.~{Urbanke}, ``{Threshold
  Saturation on BMS Channels via Spatial Coupling},'' \emph{ArXiv e-prints},
  Apr. 2010.

\bibitem{LMFC10}
M.~Lentmaier, D.~G.~M. Mitchel, G.~Fettweis, and D.~J. Costello, Jr.,
  ``Asymptotically good {LDPC} convolutional codes with {AWNG} channel
  thresholds close to the {S}hannon limit,'' in \emph{6th Intern. Symp. on
  Turbo Codes and iterative inform. Processing}, 2010.

\bibitem{TDCMM07}
A.~Thangaraj, S.~Dihidar, A.~Calderbank, S.~McLaughlin, and J.-M. Merolla,
  ``Applications of {LDPC} codes to the wiretap channel,'' \emph{Information
  Theory, IEEE Transactions on}, vol.~53, no.~8, pp. 2933--2945, Aug. 2007.

\bibitem{Poor2}
R.~Liu, H.~V. Poor, P.~Spasojevic, and Y.~Liang, ``Nested codes for secure
  transmission,'' in \emph{Proc. Personal, Indoor and Mobile Radio
  Communications, 2008. PIMRC 2008. IEEE 19th International Symposium on},
  Sept. 2008, pp. 1--5.

\bibitem{RATKS10}
V.~{Rathi}, M.~{Andersson}, R.~{Thobaben}, J.~{Kliewer}, and M.~{Skoglund},
  ``{Performance Analysis and Design of Two Edge Type LDPC Codes for the BEC
  Wiretap Channel},'' \emph{ArXiv e-prints}, Sept. 2010.

\bibitem{RiU08}
\BIBentryALTinterwordspacing
T.~Richardson and R.~Urbanke, \emph{Modern {C}oding {T}heory}.\hskip 1em plus
  0.5em minus 0.4em\relax Cambridge University Press, 2008. [Online].
  Available:
  \url{http://www.cambridge.org/catalogue/catalogue.asp?isbn=9780521852296}
\BIBentrySTDinterwordspacing

\bibitem{OVZ05}
A.~Orlitsky, K.~Viswanathan, and J.~Zhang, ``Stopping set distribution of
  {LDPC} code ensembles,'' \emph{IEEE Trans. Inform. Theory}, vol.~51, pp.
  929--953, 2005.

\bibitem{KADPS09}
K.~Kasai, T.~Awano, D.~Declercq, C.~Poulliat, and K.~Sakaniwa, ``Weight
  distributions of multi-edge type {LDPC} codes,'' in \emph{ISIT'09:
  Proceedings of the 2009 IEEE international conference on Symposium on
  Information Theory}.\hskip 1em plus 0.5em minus 0.4em\relax Piscataway, NJ,
  USA: IEEE Press, 2009, pp. 60--64.

\end{thebibliography}
\bibliographystyle{IEEEtran}

\end{document}